\documentclass[10pt,reqno]{amsart}
\usepackage{amsmath}
\usepackage{amsfonts}
\usepackage{amsthm}
\usepackage{amsbsy}
\usepackage{amssymb}
\usepackage{epsfig}
\usepackage{psfrag}
\usepackage{latexsym}



\newcommand\beque{\begin{equation*}}
\newcommand\beq{\begin{equation}}
\newcommand\eeq{\end{equation}}
\newcommand\eeque{\end{equation*}}
\newcommand\eqn{\begin{eqnarray}}
\newcommand\beqna{\begin{eqnarray*}}
\newcommand\eeqna{\end{eqnarray*}}
\newcommand\feqn{\end{eqnarray}}
\newcommand{\dirac}{\slash \!\!\!\partial}

\newtheorem{lemma}{Lemma}

\setlength{\textheight}{27pc} \oddsidemargin-0.5truecm
\evensidemargin-0.5truecm \textwidth17truecm \textheight22truecm
\topmargin-.1truecm
\numberwithin{equation}{section}

\begin{document}

\date{\today}

\title[On a polynomial zeta function]{On a polynomial zeta function}

\author{S. L. Cacciatori}
\address{Dipartimento di Fisica e Matematica, Universit\`a dell'Insubria, 22100
Como, Italy, and\\
I.N.F.N., sezione di Milano, Italy}
\email{sergio.cacciatori@uninsubria.it}

\begin{abstract}

We introduce a polynomial zeta function $\zeta^{(p)}_{P_n}$, related to certain problems of mathematical physics,
and compute its value and the value of its first derivative at the origin $s=0$, by means of a very simple technique.
As an application, we compute the determinant of the Dirac operator on quaternionic vector spaces.

\end{abstract}

\maketitle

\section{Introduction}
\label{intro}

The aim of this paper is to look at few simple properties of a particular class of zeta functions, which we dub
{\it polynomial zeta functions} because they are associated with a polynomial $P_n$ of degree $n$:
$$
\pmb{\zeta}^{(p)}_{P_n} (s)=\sum_{k=0}^\infty \frac {k^p}{P_n (k)^s}.
$$
Our motivations arise from physical questions, where, substantially,
$\pmb{\zeta}^{(p)\prime}_{P_n} (0)$ defines the determinant of an operator ${\mathcal O}$ having $\lambda_k=P_n(k)$ as eigenvalues each 
one having degeneration $k^p$:
$$
\det {\mathcal O} =e^{-\pmb{\zeta}^{(p)\prime}_{P_n} (0)}
$$
When ${\mathcal O}$ represent the Hamiltonian operator of a field theory, this provides the (Euclidean) effective
action
$$
S_{\rm eff} =-\log (\det {\mathcal O})=\pmb{\zeta}^{(p)\prime}_{P_n} (0).
$$
Thus, exact calculations of non perturbative quantum field theories effects can be performed, as for example the Schwinger effect
on flat or curved background, see for example \cite{blauvisswi}, \cite{barzer}, or discharge effects on certain charged black hole
backgrounds, \cite{belcapia1}, \cite{belcapia2}. \\
However, we think that a general investigation of the properties of the polynomial zeta functions could be of a certain
interest for pure mathematics also.\\
Our strategy is very simple and is essentially based on a direct application of the Abel-Plana formula.

\section{The polynomial zeta function}
Let us consider the polynomial
\eqn
P_n(x)=a_n x^n +a_{n-1} x^{n-1}+\ldots+a_1 x +a_0
\feqn
with $a_0 a_n \neq 0$.
Suppose that the zeros of $P_n$ are not in $\mathbb{N}$. We define the {\it polynomial zeta function}
\eqn
\pmb{\zeta}^{(p)}_{P_n} (s):= \sum_{k=0}^\infty \frac {k^p}{P_n(k)^s}, \qquad\ p\in \mathbb{N}.
\feqn
We are interested to compute $\pmb{\zeta}^{(p)}_{P_n} (0)$ and $\pmb{\zeta}^{(p)\prime}_{P_n} (0)$. \\
To this hand, let us first assume that all zeros
$x_i$, $i=1,\ldots,n$, of $P_n$ have negative real part: $\Re x_i<0$. With these hypothesis, the function
$$
f(z)=\frac {z^p}{P_n(z)^s}
$$
for $\Re s>0$ is regular on $\Re z\geq 0$ and
satisfies the conditions
\begin{enumerate}
\item $\lim_{y\to \infty} e^{-2\pi |y|} f(x+iy) =0,$
\item $\lim_{x\to +\infty} \int_{-\infty}^{+\infty} e^{-2\pi |y|} f(x+iy) dy =0.$
\end{enumerate}
So, we can apply the Abel-Plana formula \cite{whitwat} to get
\eqn
\pmb{\zeta}^{(p)}_{P_n} (s)= \frac 1{2a_0^s} \delta_{p,0} +\int_{0}^\infty \frac {x^p}{P_n(x)^s} dx+ i \int_0^\infty \left[
\frac {(ix)^p}{P_n(ix)^s} -\frac {(-ix)^p}{P_n(-ix)^s} \right] \frac 1{e^{2\pi x}-1} dx.  \label{zeta}
\feqn
We write the polynomial $P_n$ as
$$
P_n(x)=a_n \prod_{i=1}^n (x-x_i)\equiv a_n P_n^0 (x),
$$
so that $P_n^0$ is monic. We also define
$$
\tilde {P}_n (x)=x^n P_n^0 (1/x)=\prod_{i=1}^n (1-x_i x).
$$
To perform our computations we need two lemmas.
\begin{lemma} \label{lemma1}
Let $P^0_n $ be a monic polynomial of degree $n$ and having all zeros $x_i$, $i=1,\ldots, n$ with negative real part, and $p\in \mathbb{N}$. Then
\eqn
\lim_{s\to 0} \int_0^\infty x^p \left\{ \frac 1{P^0_n (x)^s} -\frac 1n \sum_{i=1}^n \frac 1{(x-x_i)^s} \right\}dx=0.
\feqn
\end{lemma}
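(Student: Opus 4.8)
\emph{Set-up and reduction.} Write $Q:=P^0_n$, so that $Q(x)=\prod_{i=1}^n(x-x_i)$ is monic with $\Re x_i<0$; then $Q(x)>0$ for $x\ge0$ and none of the functions below is singular on $[0,\infty)$. The two integrals
\[
J(s):=\int_0^\infty\frac{x^p}{Q(x)^s}\,dx,\qquad I_i(s):=\int_0^\infty\frac{x^p}{(x-x_i)^s}\,dx
\]
converge absolutely for $\Re s>p+1$, and on that half-plane the integrand of the statement splits as $x^p Q(x)^{-s}-\frac1n\sum_i x^p(x-x_i)^{-s}$. Hence the (meromorphic continuation of the) left-hand side equals $J(s)-\frac1n\sum_i I_i(s)$, and the claim amounts to: both $J$ and $\frac1n\sum_i I_i$ are regular at $s=0$ and $J(0)=\frac1n\sum_i I_i(0)$. (For $n\ge2$ the difference fails to be integrable near $s=0$, so the limit is meant in the sense of this continuation from $\Re s>p+1$.) My plan is to evaluate the two sides separately.

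\emph{Step 1: the terms $I_i$.} I would first record the elementary formula, valid for $\Re\alpha>0$ and $\Re s>p+1$,
\[
\int_0^\infty x^p(x+\alpha)^{-s}\,dx=\alpha^{\,p+1-s}\,\frac{\Gamma(p+1)\,\Gamma(s-p-1)}{\Gamma(s)},
\]
proved first for $\alpha>0$ by the substitution $x=\alpha u$ and the Beta integral, and then for all $\Re\alpha>0$ by analyticity in $\alpha$. Taking $\alpha=-x_i$ gives $I_i(s)=(-x_i)^{p+1-s}\,\Gamma(p+1)\Gamma(s-p-1)/\Gamma(s)$, and since $\Gamma(s-p-1)/\Gamma(s)=\prod_{k=1}^{p+1}(s-k)^{-1}$ is regular at $0$ we obtain $I_i(0)=(-x_i)^{p+1}\,p!\,(-1)^{p+1}/(p+1)!=x_i^{p+1}/(p+1)$, hence $\frac1n\sum_i I_i(0)=\frac{1}{n(p+1)}\sum_i x_i^{p+1}$.

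\emph{Step 2: the term $J$.} For $J(0)$ I would use the standard device of cutting off and subtracting the asymptotic expansion. Fix $R>\max_i|x_i|$ and split $J=\int_0^R+\int_R^\infty$; the first piece is entire and equals $R^{p+1}/(p+1)$ at $s=0$. In the tail write $Q(x)=x^n\tilde P_n(1/x)$ and expand $\tilde P_n(y)^{-s}=\sum_{j\ge0}c_j(s)y^j$ (convergent for $x\ge R$), truncated at $j=p+1$: each $\int_R^\infty c_j(s)x^{p-ns-j}\,dx=c_j(s)R^{\,p+1-ns-j}/(ns+j-p-1)$ continues meromorphically, while the remainder $\int_R^\infty x^{p-ns}\bigl(\tilde P_n(1/x)^{-s}-\sum_{j\le p+1}c_j(s)x^{-j}\bigr)\,dx$ is holomorphic for $\Re s>-1/n$. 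Because $\tilde P_n(y)^0\equiv1$ we have $c_0(0)=1$ and $c_j(0)=0$ for $j\ge1$; this kills the only candidate pole at $s=0$ (the $j=p+1$ term, whose residue is $\propto c_{p+1}(0)$) and makes the remainder vanish at $s=0$. Letting $s\to0$: the $j=0$ term contributes $-R^{p+1}/(p+1)$ and cancels $\int_0^R$, the terms $1\le j\le p$ vanish, and the $j=p+1$ term is the genuine $0/0$ limit $\lim_{s\to0}c_{p+1}(s)R^{-ns}/(ns)=c_{p+1}'(0)/n$, so $J(0)=c_{p+1}'(0)/n$. Finally $\partial_s\tilde P_n(y)^{-s}|_{s=0}=-\log\tilde P_n(y)=-\sum_i\log(1-x_iy)=\sum_{k\ge1}\frac1k\bigl(\sum_i x_i^k\bigr)y^k$, so $c_{p+1}'(0)=\frac{1}{p+1}\sum_i x_i^{p+1}$ and $J(0)=\frac{1}{n(p+1)}\sum_i x_i^{p+1}=\frac1n\sum_i I_i(0)$, which is the assertion.

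\emph{Main obstacle.} The only real work is the bookkeeping in Step 2: making the term-by-term meromorphic continuation of the tail rigorous — that is, controlling $\tilde P_n(1/x)^{-s}-\sum_{j\le p+1}c_j(s)x^{-j}=O(x^{-p-2})$ uniformly in $x$ and locally uniformly in $s$ — and handling the $j=p+1$ term with care, since it is precisely the $0/0$ limit that produces the answer. The cancellation of the auxiliary scale $R$ is the built-in consistency check that this has been carried out correctly.
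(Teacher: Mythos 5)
Your argument is correct, but it takes a genuinely different route from the paper. The paper treats the difference as a single object: it splits the integral at $x=1$, substitutes $x\mapsto 1/x$ on the tail, turns the resulting integral over $(0,1]$ into a contour integral $\int_\Gamma$ around the branch point at the origin divided by the monodromy factors $e^{2\pi ins}-1$ and $e^{2\pi is}-1$, and resolves the $0/0$ limit by l'H\^opital; the limit is then a single contour integral whose integrand contains $\log\tilde P_n(x)-\sum_i\log(1-x_ix)\equiv 0$, so it vanishes identically. You instead evaluate the two sides separately: the terms $I_i$ via the Beta-integral identity (which is exactly the paper's later formula for $\int_0^\infty x^p(x-x_i)^{-s}dx$, so no extra cost), and $J$ by the classical Mellin-type device of subtracting the truncated asymptotic expansion $\sum_{j\le p+1}c_j(s)x^{-j}$ on the tail, with the would-be pole at $s=0$ killed by $c_{p+1}(0)=0$ and the finite part produced by the $0/0$ term $c_{p+1}(s)R^{-ns}/(ns)\to c_{p+1}'(0)/n$. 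Both proofs ultimately hinge on the same identity $\log\tilde P_n(y)=\sum_i\log(1-x_iy)$, but yours avoids contours and monodromy entirely and is arguably easier to make fully rigorous (the only point to nail down, as you note, is the locally uniform bound $O(x^{-p-2})$ on the Taylor remainder, which follows from Cauchy estimates on the $c_j(s)$); it also yields the explicit value $J(0)=\frac{1}{n(p+1)}\sum_ix_i^{p+1}$ as a by-product, which is precisely what the paper needs afterwards for $\pmb{\zeta}^{(p)}_{P_n}(0)$. Your parenthetical observation that for $n\ge2$ the integral genuinely diverges for small $s\ne0$, so the limit must be read through the continuation from $\Re s>p+1$, matches the paper's own caveat. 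One cosmetic remark: the assertion $Q(x)>0$ on $[0,\infty)$ tacitly assumes real coefficients (roots in conjugate pairs); the paper is equally silent on this, so it is not a defect of your argument relative to the source.
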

Here we mean the limit of the analytic continuation $f(s)$ of the expression defined by the integral.
\begin{proof}
We have
$$
f(s)=\int_0^1 x^p \left\{ \frac 1{P^0_n (x)^s} -\frac 1n \sum_{i=1}^n \frac 1{(x-x_i)^s} \right\}dx+
\int_1^\infty x^p \left\{ \frac 1{P^0_n (x)^s} -\frac 1n \sum_{i=1}^n \frac 1{(x-x_i)^s} \right\}dx.
$$
For the first integral we can exchange the limit with the integral, so that it vanishes when $s\mapsto0$. For the second integral, it is
convenient to take the change of integration variable $x\mapsto 1/x$ so that it becomes
$$
I(s)=\int_0^1 \left[ \frac {x^{ns-p-2}}{\tilde {P}_n (x)^s} -\frac 1n \sum_{i=1}^n \frac {x^{s-p-2}}{(1-x_i x)^s} \right]dx.
$$
The integrand has a branch point in $x=0$, so that we introduce a cut on the positive real semiaxis. Next, we define a path $\Gamma$, starting from
the point $x=1\equiv e^{0}$, going around $x=0$ and then ending to the point $x=1^+\equiv e^{2\pi i}$. See the figure 1.
\begin{figure}[h]\label{fig}
\begin{center}
\includegraphics[angle=90,scale=0.30]{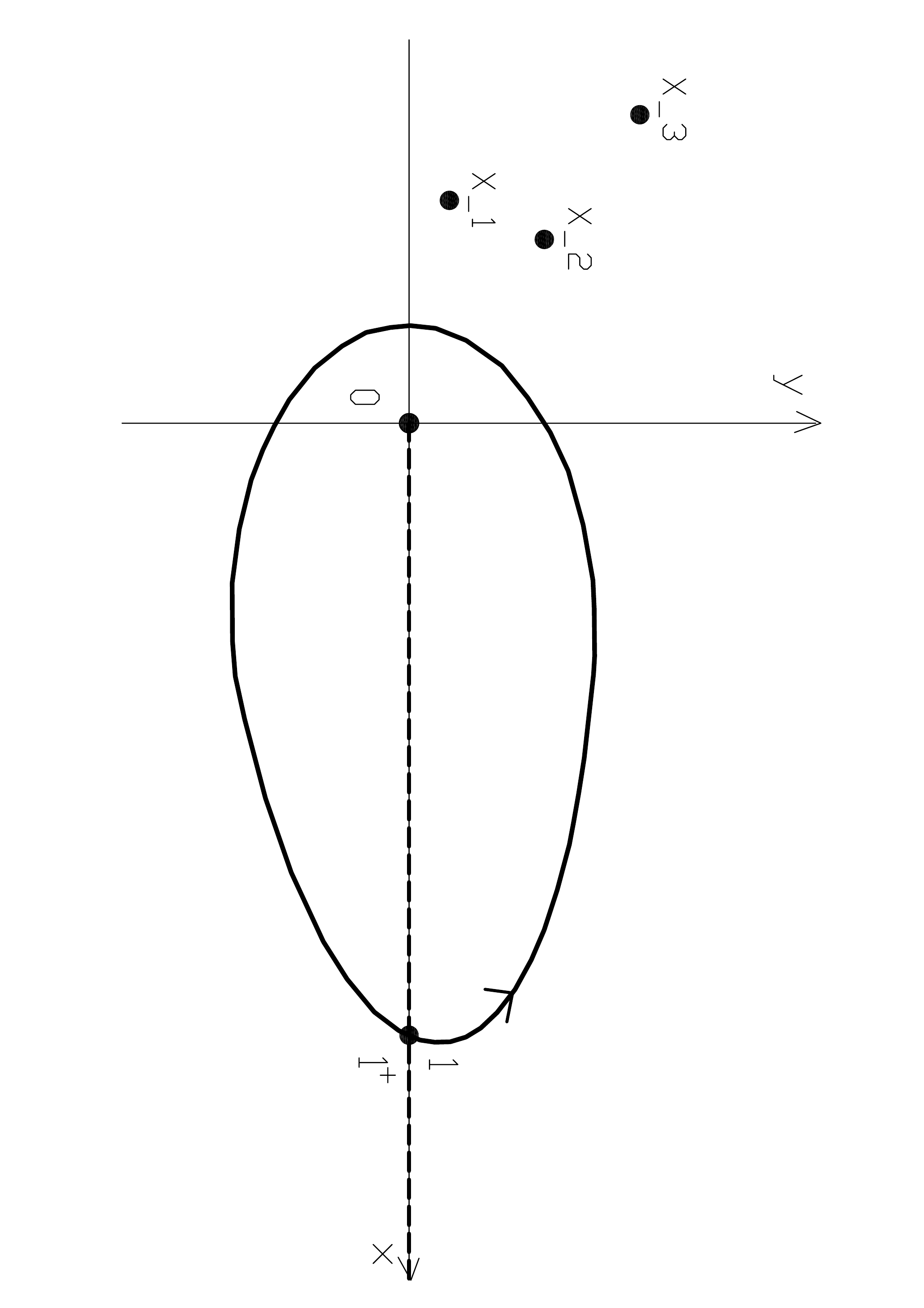}
\caption{\em The path $\Gamma$.}
\end{center}
\end{figure}
This is not a closed path, but in the usual way
it is easy to show that, when $I$ is well defined, we obtain
$$
I(s)=\frac 1{e^{2\pi i n s}-1} \int_\Gamma \frac {x^{ns-p-2}}{\tilde {P}_n (x)^s} dx -
\frac 1{e^{2\pi i s}-1} \frac 1n \int_\Gamma  \sum_{i=1}^n \frac {x^{s-p-2}}{(1-x_i x)^s} dx.
$$
As $\Gamma$ does not pass through $x=0$, this integral are well defined in a neighborhood of $s=0$ and we can take the limit
or differentiate under the integral sign. Note that for $s=0$ the monodromy becomes trivial and the path closes, so that both integrals
vanish. Thus we can use the de l'Hospital rule to get
$$
\lim_{s\to0} I(s)=\frac 1{2\pi i} \int_\Gamma \frac 1{x^{p+2}} \left[ (\log x-\frac 1n \log \tilde {P}_n (x)) -\frac 1n
(n\log x -\sum_{i=1}^n \log (1-x_i x))  \right]dx =0.
$$
\end{proof}
A second important step is the following:
\begin{lemma} \label{lemma2}
Let $p$ a positive integer and $F^{(p)}_{P^0_n}(s)$ the analytic continuation on the complex plane of the function
$$
F^{(p)}_{P^0_n}(s)=\int_0^{\infty} x^p  \left\{ \frac {\log P^0_n (x)}{P^0_n(x)^s}
-\sum_{i=1}^n \frac {\log (x-x_i)}{(x-x_i)^s}   \right\}dx,
\quad\
\Re (s)> p+1.
$$
Then
\eqn
F^{(p)}_{P^0_n}(0)=-\frac 1{2n} \sum_{l=1}^p \left[ \frac 1{l(p+1-l)} (\sum_{i=1}^n x_i^l) (\sum_{j=1}^n x_j^{p-l+1})\right]
+\frac {H_p}{p+1}  \sum_{i=1}^n x_i^{p+1},
\feqn
where $H_p =\sum_{i=1}^p \frac 1i$ are the harmonic numbers. In particular $F^{(0)}_{P^0_n}(0)=0$.
\end{lemma}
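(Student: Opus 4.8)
My plan is to reduce $F^{(p)}_{P^0_n}(0)$ to the value at the origin of the derivative of a companion integral, and then to exploit exactly the contour device used in Lemma~\ref{lemma1}. Set
$$
H(s):=\int_0^\infty x^p\Bigl\{\frac1{P^0_n(x)^s}-\sum_{i=1}^n\frac1{(x-x_i)^s}\Bigr\}\,dx ,\qquad \Re(s)>p+1 ,
$$
where each of the $n+1$ summand--integrals converges separately in that half--plane (the $(x-x_i)$--integrals because $\Re(p-s)<-1$, the $P^0_n$--integral \emph{a fortiori}). Differentiating under the integral sign there gives $H'(s)=-F^{(p)}_{P^0_n}(s)$, so by uniqueness of analytic continuation $F^{(p)}_{P^0_n}(0)=-H'(0)$; it suffices to compute $H'(0)$.

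I split $H=\int_0^1+\int_1^\infty$. The $\int_0^1$ piece is entire in $s$, and since $\log P^0_n(x)=\sum_i\log(x-x_i)$ its derivative at $s=0$ equals $\int_0^1 x^p\{-\log P^0_n(x)+\sum_i\log(x-x_i)\}\,dx=0$. In the $\int_1^\infty$ piece I put $x\mapsto 1/u$, turning it into $\int_0^1\{u^{ns-p-2}\tilde{P}_n(u)^{-s}-\sum_i u^{s-p-2}(1-x_iu)^{-s}\}\,du$, and then---exactly as in the proof of Lemma~\ref{lemma1}---I wrap $[0,1]$ onto the path $\Gamma$ to get the analytic continuation
$$
\int_1^\infty(\cdots)\,dx=\underbrace{\frac1{e^{2\pi ins}-1}\int_\Gamma\frac{u^{ns-p-2}}{\tilde{P}_n(u)^s}\,du}_{=:A(s)}\ -\ \underbrace{\frac1{e^{2\pi is}-1}\int_\Gamma\sum_{i=1}^n\frac{u^{s-p-2}}{(1-x_iu)^s}\,du}_{=:B(s)} .
$$
Both $A$ and $B$ extend holomorphically through $s=0$ (numerator and denominator vanish simply there), and $H'(0)=A'(0)-B'(0)$.

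To compute $A'(0)$ and $B'(0)$ I expand to first order in $s$. Writing $u^{\mu s-p-2}\psi(u)^{-s}=u^{-p-2}e^{s\,\omega(u)}$ with $\omega(u)=\mu\log u-\log\psi(u)$ and integrating term by term along the compact contour $\Gamma$, and using $\int_\Gamma u^{-p-2}\,du=0$ together with $\frac1{e^{2\pi i\mu s}-1}=\frac1{2\pi i\mu s}-\frac12+O(s)$, I obtain
$$
\frac1{e^{2\pi i\mu s}-1}\int_\Gamma\frac{u^{\mu s-p-2}}{\psi(u)^{s}}\,du=\frac1{2\pi i\mu}\int_\Gamma\frac{\omega}{u^{p+2}}\,du+s\Bigl(\frac1{4\pi i\mu}\int_\Gamma\frac{\omega^2}{u^{p+2}}\,du-\frac12\int_\Gamma\frac{\omega}{u^{p+2}}\,du\Bigr)+O(s^2) .
$$
Applying this with $(\mu,\psi)=(n,\tilde{P}_n)$ for $A$ and, termwise over $i$, with $(\mu,\psi)=(1,\,1-x_iu)$ for $B$, and using $\sum_{i}(\log u-\log(1-x_iu))=n\log u-\log\tilde{P}_n(u)$, the $\tfrac12\int_\Gamma u^{-p-2}\omega$--terms cancel out of $A'(0)-B'(0)$, and therefore
$$
F^{(p)}_{P^0_n}(0)=-H'(0)=B'(0)-A'(0)=\frac1{4\pi i}\int_\Gamma\frac1{u^{p+2}}\Bigl(\sum_{i=1}^n\tau_i(u)^2-\frac1n\,\tau(u)^2\Bigr)\,du ,
$$
where $\tau_i(u):=-\log(1-x_iu)=\sum_{k\geq1}\tfrac{x_i^k}{k}u^k$, $\ \tau(u):=-\log\tilde{P}_n(u)=\sum_i\tau_i(u)=\sum_{k\geq1}\tfrac{\sigma_k}{k}u^k$, $\ \sigma_k:=\sum_{i=1}^n x_i^k$; after expanding the squares every $\log u$ cancels, precisely because $\sum_i1=n$ and $\sum_i\tau_i=\tau$.

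Finally, $\sum_i\tau_i^2-\tfrac1n\tau^2$ is single--valued, analytic near $u=0$ and has power series starting at $u^2$, so the contour integral is $2\pi i$ times the coefficient of $u^{p+1}$ in it; reading that coefficient off the power series yields
$$
F^{(p)}_{P^0_n}(0)=\frac12\Bigl[\sigma_{p+1}\sum_{l=1}^p\frac1{l(p+1-l)}-\frac1n\sum_{l=1}^p\frac{\sigma_l\,\sigma_{p+1-l}}{l(p+1-l)}\Bigr] ,
$$
and the elementary identity $\sum_{l=1}^p\frac1{l(p+1-l)}=\frac1{p+1}\sum_{l=1}^p(\tfrac1l+\tfrac1{p+1-l})=\frac{2H_p}{p+1}$ converts the first term into $\frac{H_p}{p+1}\sigma_{p+1}$, which is exactly the asserted formula; for $p=0$ the sums over $l$ are empty, so $F^{(0)}_{P^0_n}(0)=0$. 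The crux is the third paragraph: since each contour integral vanishes at $s=0$ one is genuinely forced to second order, and one must check that the $\log u$ and $(\log u)^2$ pieces---which individually are not even elementary $\Gamma$--integrals---drop out of the final difference; everything after that is a routine residue computation.
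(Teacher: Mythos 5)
Your argument is correct and arrives at exactly the paper's intermediate formula, namely $F^{(p)}_{P^0_n}(0)=\tfrac12\,[u^{p+1}]\bigl(\sum_i\log^2(1-x_iu)-\tfrac1n\log^2\tilde P_n(u)\bigr)$, before the final residue/Leibniz step; the core machinery (split at $x=1$, inversion $x\mapsto1/x$, wrapping $[0,1]$ onto $\Gamma$ with the $(e^{2\pi i\mu s}-1)^{-1}$ monodromy factors, reduction to a Taylor coefficient at $u=0$) is the same as in the paper. Where you genuinely diverge is in the organization: you differentiate the \emph{log-free} companion $H(s)$ in $s$ and read $F^{(p)}_{P^0_n}(0)=-H'(0)$ off the order-$s$ term of the $\Gamma$-representation, whereas the paper keeps $\log P^0_n$ inside the integrand, so its integrand acquires an extra monodromy from $\log x$ around the origin and the relation between $J(s)$ and $\int_\Gamma$ picks up an additional double-pole term that must be carried through de l'Hospital. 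Your route trades that bookkeeping for a second-order expansion of $e^{s\omega}$ and of $(e^{2\pi i\mu s}-1)^{-1}$, and the two cancellations you isolate --- of the $-\tfrac12\int_\Gamma\omega\,u^{-p-2}\,du$ terms between $A'(0)$ and $B'(0)$ (because $\omega_A=\sum_i\omega_i$), and of the $\log u$ and $(\log u)^2$ pieces inside $\sum_i\omega_i^2-\omega_A^2/n$ --- are precisely the right checks and are verified correctly; they are the analogue of the paper's observation that the $\log x$ contributions drop out after l'Hospital. Both versions then finish with the same residue computation and the identity $\sum_{l=1}^p\frac1{l(p+1-l)}=\frac{2H_p}{p+1}$. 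The only points worth flagging are shared with the paper rather than specific to you: the branch bookkeeping in $\log P^0_n(x)=\sum_i\log(x-x_i)$ for possibly non-real $x_i$, and the justification that $\Gamma$ can be chosen so that the singularities of $\log(1-x_iu)$ (which lie at $1/x_i$ in the left half-plane) stay outside it; neither affects the result.
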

\begin{proof}
The proof is very similar to the previous one so that we will be essential. As before we first obtain
\begin{eqnarray*}
&& \lim_{s\to0} F^{(p)}_{P^0_n}(s)=\lim_{s\to0} [J(s)-\sum_{i=1}^n J_i(s)], \cr
&& J(s):= \int_0^1 \left[ x^{ns-p-2} \frac {\log \tilde {P}_n (x)-n\log x}{\tilde {P}_n(x)^s} \right]dx,\cr
&& J_i(s):= \int_0^1 \left[ x^{s-p-2} \frac {\log (1-x_i x)-\log x}{(1-x_i x)^s} \right]dx
\end{eqnarray*}
Takeing $\Gamma$ as above, and noting the monodromy of the logarithm, we have
$$
\int_\Gamma \left[ x^{ns-p-2} \frac {\log \tilde {P}_n (x)-n\log x}{\tilde {P}_n(x)^s} \right]dx=
(e^{2\pi i n s}-1) J(s) -\frac {2\pi i n e^{2\pi i n s} }{e^{2\pi i n s}-1} \int_\Gamma\frac {x^{ns-p-2}}{\tilde {P}_n (x)^s}dx
$$
so that
$$
J(s)=\frac 1{e^{2\pi i n s}-1} \int_\Gamma \left[ x^{ns-p-2} \frac {\log \tilde {P}_n (x)-n\log x}{\tilde {P}_n(x)^s} \right]dx
+\frac {2\pi i n e^{2\pi i n s} }{(e^{2\pi i n s}-1)^2}\int_\Gamma\frac {x^{ns-p-2}}{\tilde {P}_n (x)^s}dx
$$
and similarly
$$
J_i(s)=\frac 1{e^{2\pi i s}-1} \int_\Gamma \left[ x^{s-p-2} \frac {\log (1-x_i x)-\log x}{(1-x_i x)^s} \right]dx
+\frac {2\pi i e^{2\pi i s} }{(e^{2\pi i s}-1)^2}\int_\Gamma\frac {x^{s-p-2}}{(1-x_i x)^s}dx.
$$
Again, we can use the de l'Hospital rule to obtain
$$
F^{(p)}_{P^0_n}(0)=-\frac 12 \frac 1{(p+1)!} \frac {d^{p+1}}{dx^{p+1}} \left[ \frac 1n \left(\sum_{i=1}^n \log (1-x x_i)\right)^2
-\sum_{i=1}^n \left( \log (1-x x_i)\right)^2 \right]_{x=0}.
$$
Note that for $p=0$ this vanishes. For $p\geq 1$, using the Leibnitz rule for derivation, we obtain
\begin{eqnarray*}
&& F^{(p)}_{P^0_n}(0)=-\frac 1{2(p+1)!} \sum_{l=1}^p \binom {p+1}{l} \left[ \frac 1n
\left( \sum_{i=1}^n \frac {d^{l-1}}{dx^{l-1}} \frac {x_i}{1-x_i x} \right)
\left( \sum_{j=1}^n \frac {d^{p-l}}{dx^{p-l}} \frac {x_j}{1-x_j x} \right)\right. \cr
&& {\phantom {F^{(p)}_{P^0_n}(0)=}}
\left.-\sum_{i=1}^n \left( \frac {d^{l-1}}{dx^{l-1}} \frac {x_i}{1-x_i x} \frac {d^{p-l}}{dx^{p-l}} \frac {x_i}{1-x_i x}\right)
\right]\cr
&& {\phantom {F^{(p)}_{P^0_n}(0)}} =
-\frac 12 \sum_{l=1}^p \left[ \frac 1{l(p-l+1)} \left( \frac 1n (\sum_{i=1}^n x_i^l) (\sum_{j=1}^n x_j^{p-l+1}) -
\sum_{i=1}^n x_i^{p+1} \right) \right],
\end{eqnarray*}
from which the thesis follows.
\end{proof}

\subsection{\boldmath{$\pmb{\zeta}_{P_n}^{(2m)}(0)$}}
From (\ref{zeta}) and using Lemma \ref{lemma1}, if $p=2m$, we get
$$
\pmb{\zeta}^{(2m)}_{P_n} (0)=\frac 12 \delta_{m,0}+\lim_{s\to 0}\int_0^{\infty} \frac {x^p}{P_n(x)^s} dx=
\frac 12 \delta_{m,0}+\lim_{s\to 0} \frac 1n \sum_{i=1}^n \int_0^{\infty} \frac {x^p}{(x-x_i)^s} dx.
$$
Using
\eqn
\int_0^\infty \frac {x^p}{(x-x_i)^s}dx=\frac {\Gamma (s-p-1) p!}{\Gamma (s) } (-x_i)^{p+1-s} \label{relaz}
\feqn
we get
\eqn
\pmb{\zeta}^{(2m)}_{P_n} (0) =\frac 12 \delta_{m,0}- \frac 1{2m+1} \sum_{i=1}^{n} (-x_i)^{2m+1}. \label{zeta0pari}
\feqn
For example, using $x_1+\ldots +x_n =-a_{n-1}/a_n$, we get
\eqn
\pmb{\zeta}_{P_n} (0)\equiv \pmb{\zeta}^{(0)}_{P_n} (0)=\frac 12 -\frac 1n \frac {a_{n-1}}{a_n}.
\feqn

\subsection{\boldmath{$\pmb{\zeta}_{P_n}^{(2m+1)}(0)$}}
For the case $p=2m+1$, we get
$$
\pmb{\zeta}^{(2m+1)}_{P_n} (0)=\lim_{s\to 0}\int_0^{\infty} \frac {x^p}{P_n(x)^s} dx -(-1)^m 2\int_0^\infty \frac {x^{2m+1}}{e^{2\pi x}-1}dx.
$$
From Lemma \ref{lemma1} and (\ref{relaz}) we get
\eqn
\pmb{\zeta}^{(2m+1)}_{P_n} (0) = \frac 1{2m+2} \sum_{i=1}^{n} x_i^{2m+2} -(-1)^m 2 \frac {(2m+1)!}{(2\pi)^{2m+2}} \zeta(2m).\label{zeta0dispari}
\feqn
Using
$$
\sum_{i=1}^n x_i^2 =\frac {a_{n-1}^2}{a_n^2} -2 \frac {a_{n-2}}{a_n},
$$
we find for example
\eqn
\pmb{\zeta}^{(1)}_{P_n} (0)= \frac 12 \frac {a_{n-1}^2}{a_n^2}-\frac {a_{n-2}}{a_n}-\frac 1{12}.
\feqn

\subsection{\boldmath{$\pmb{\zeta}_{P_n}^{(p)\prime}(0)$}}
We first note that
\eqn
\pmb{\zeta}^{(p)\prime}_{P_n} (0)=- \pmb{\zeta}^{(p)}_{P_n} (0) \log a_n + \pmb{\zeta}^{(p)\prime}_{P^0_n} (0).\label{trucchetto}
\feqn
Differentiating (\ref{zeta}) with respect to $s$, for $P_n^0$, we obtain
$$
\pmb{\zeta}^{(p)\prime}_{P^0_n} (s)=-\frac 12 \frac {\sum_{i=1}^n \log (-x_i) }{\prod_{i=1}^n (-x_i)^s} \delta_{p,0}
-\int_0^\infty x^p \left\{ \frac {\log P^0_n (x)}{P^0_n(x)^s}
+i \left[ i^p\frac {\log P^0_n(ix)}{P^0_n(ix)^s} -(-i)^p\frac {\log P^0_n(-ix)}{P^0_n(-ix)^s} \right] \frac 1{e^{2\pi x}-1}
\right\}dx.\label{dezeta}
$$
Now
\eqn
&& \lim_{s\to 0} i\int_0^\infty \left[ (ix)^p\frac {\log P^0_n(ix)}{P^0_n(ix)^s} -(-ix)^p\frac {\log P^0_n(-ix)}{P^0_n(-ix)^s} \right]
\frac {dx}{e^{2\pi x}-1} \cr
&& \qquad\ =i\sum_{i=1}^n \int_0^\infty [(ix)^p\log (ix-x_i)-(-ix)^p\log (-ix-x_i)]\frac {dx}{e^{2\pi x}-1}\cr
&& \qquad\ =-i\frac d{ds}_{|_{s=0}} \sum_{i=1}^n \int_0^\infty [(ix)^p(ix-x_i)^{-s}-(-ix)^p(-ix-x_i)^{-s}]  \frac {dx}{e^{2\pi x}-1}.\nonumber
\feqn
We can apply the Abel-Plana formula to this expression to obtain
$$
\pmb{\zeta}^{(p)}_{P^0_n} (0)=\lim_{s\to0} \int_0^\infty x^p \left\{ \frac {\log P^0_n (x)}{P^0_n(x)^s}-
\sum_{i=1}^n \frac {\log (x-x_i)}{(x-x_i)^s}   \right\} dx +\sum_{i=0}^n {\zeta}^{(p)\prime}_H (-x_i,0),
$$
where
$$
{\zeta}^{(p)}_H (a,s)=\sum_{k=0}^{\infty} \frac {k^p}{(k+a)^s}.
$$
Writing
$$
k^p= [(k+a)-a]^p=\sum_{l=0}^p \binom {p}{l} (k+a)^l (-a)^{p-l},
$$
we see that
$$
{\zeta}^{(p)}_H (a,s)=\sum_{l=0}^p \binom {p}{l} (-a)^{p-l} \zeta_H (a,s-l)
$$
where
$$
\zeta_H (a,s)=\sum_{n=0}^{\infty} \frac 1{(n+a)^s}
$$
is the classical Hurwitz zeta function \cite{whitwat}.
From Lemma \ref{lemma2} and (\ref{trucchetto}) we finally get
\begin{eqnarray}
&& \pmb{\zeta}^{(p)\prime}_{P_n} (0)=- \pmb{\zeta}^{(p)}_{P_n} (0) \log a_n +\frac 1{2n} \sum_{l=1}^p \left[ \frac 1{l(p+1-l)} (\sum_{i=1}^n x_i^l)
(\sum_{j=1}^n x_j^{p-l+1})\right] -\frac {H_p}{p+1}  \sum_{i=1}^n x_i^{p+1} \cr
&& \phantom{\pmb{\zeta}^{(p)\prime}_{P_n} (0)=} +\sum_{i=1}^n \sum_{l=0}^p \binom {p}{l} x_i^{p-l} \zeta'_H (-x_i, -l). \label{zetaprimo0}
\end{eqnarray}
For example, for $p=0$ and $p=1$ we find
\eqn
&& \pmb{\zeta}^{\prime}_{P_n} (0)\equiv \pmb{\zeta}^{(0)\prime}_{P_n} (0)=-\left( \frac 12 -\frac 1n \frac {a_{n-1}}{a_n} \right)\log a_n
+\log \frac {\prod_{i=1}^n \Gamma(-x_i)}{(2\pi)^{\frac n2}}, \\
&& \pmb{\zeta}^{(1)\prime}_{P_n} (0)= -\left( \frac 12 \frac {a^2_{n-1}}{a^2_n} -\frac {a_{n-2}}{a_n} -\frac 1{12}
\right)\log a_n +\frac 12 \left[ \frac {a^2_{n-1}}{a^2_n} \left( \frac 1n -1 \right) +2 \frac {a_{n-2}}{a_n} \right] \cr
&& \phantom{\pmb{\zeta}^{(1)\prime}_{P_n} (0)=} +\sum_{i=1}^n \left( x_i \log \frac {\Gamma (-x_i)}{\sqrt {2\pi}} +\zeta'_H (-x_i, -1) \right),
\feqn
where we used the identity
$$
\zeta'_H (a,0) =\log \frac {\Gamma (a)}{\sqrt {2\pi}}.
$$

\

\subsection{Remark} We determined formulas (\ref{zeta0pari}), (\ref{zeta0dispari}) and (\ref{zetaprimo0}) with the assumption that
all zeros of the polynomial have negative real part. However, it is easy to see that the same formulas hold true for the general case, the only
assumption being $x_i \notin \mathbb{N}$ for any $i=1,\ldots,n$. To prove this it suffices to note that
\eqn
\pmb{\zeta}^{(p)}_{P_n} (s)= \sum_{k=0}^{N-1} \frac {k^p}{P_n(k)^s}+ \sum_{k=0}^\infty \frac {(k+N)^p}{P_n(k+N)^s}
=\sum_{k=0}^{N-1} \frac {k^p}{P_n(k)^s}+\sum_{l=0}^p \binom {p}{l} N^{p-l} \pmb{\zeta}^{(l)}_{P_n^{(N)}} (s) \label{final}
\feqn
where $P_n^{(N)} (x):=P_n(x+N)$ has zeros in $x_i-N$. If we choose $N>{\rm Max}_{i=1,\ldots,n} \{\Re x_i\}$, then we can differentiate
(\ref{final}) and use our previous results to $\pmb{\zeta}^{(l)}_{P_n^{(N)}} (s)$ to show that they hold true for $\pmb{\zeta}^{(p)}_{P_n} (s)$ also.

\

\section{Final Remarks}
We studied the polynomial zeta function $\pmb{\zeta}^{(p)}_{P_n}$ limiting ourself to the problem of compute its value and the value of its first
derivative at $s=0$. This is because our interest in physical applications. For example, our results can be used to compute the Dirac operator
on a quaternionic projective space $\mathbb {HP}^n$. Its eigenvalues are \cite{milhorat}
$$
\pm \sqrt {\lambda_{m,k}}, \qquad \pm \sqrt {\mu_{m,k}}
$$
with
\eqn
&& \lambda_{m,k} = 4m^2 +4m(2n+k) +4n(n+k)-4(k+1), \qquad\ k\in \{1,2,\ldots,n-1 \}, \ m\in \mathbb {N}^0, \cr
&& \mu_{m,k}=\lambda_{m,k+1}+8(k+1), \qquad\ k\in \{1,2,\ldots,n-1 \},\ m\in\left\{ \begin{array}{c} \mathbb{N} \quad \mbox{if $k=n$} \\
\mathbb{N}^0 \quad \mbox{otherwise} \end{array} \right. . \nonumber
\feqn
The logarithm of the determinant of the Dirac operator is
$$
\zeta'_{\dirac} (0) =\frac 12 \zeta'_{\dirac^2} (0) -\frac 14 \zeta_{\dirac} (0) \log (-1).
$$
Thus we can apply our results to get
$$
\det (\dirac)_{\mathbb {HP}^n} =e^{\zeta'_{\dirac} (0)}=(-1)^{\frac n8 (5n+3)}
\frac {\pi^{2n-1}}{2^{5n^2-2n-2} (n-2)! [(n-1)!(n+1)!]^{n-1} (2n)!} \prod_{k=1}^{n-1} \frac 1{(n+k-1)! (n+k+1)!}.
$$
Another very interesting physical application is to the problem of describing the discharging of a charged black hole. In the case
when this is described by a Nariai solution, both the Dirac and the Klein-Gordon equations can be explicitly solved and
the effective action describing the Schwinger pair production can be computed exactly. This phenomenon will be described in details
elsewhere \cite{belcapia1}, \cite{belcapia1},
so that we only note that, in that case, it arises the problem to compute the derivative in $s=0$ of multiple sums of the form
(see also \cite{elizalde}, \cite{sprea2})
\eqn
\tilde{\pmb{\zeta}}^{(m)}_{P_n}(s)=\sum_{\vec m \in \mathbb{Z}^m} \frac 1{P_n(m_1+\ldots+m_m)^s}=\sum_{k=0}^\infty \frac {d^{(m)}_k}{P_n(k)^s},
\feqn
where $d^{(m)}_k$ is the cardinality of the set of partitions $k=m_1+\ldots+m_m$ of $k$. Now, using the recurrence relation
$d^{(m)}_n=\sum_{l=0}^n d^{(m-1)}_l$ one can easily expand $d^{(m)}_k$ as a polynomial
$$
d^{(m)}_k=\sum_{l=0}^m c_l k^l.
$$
For example
$$
d^{(2)}_k=k+1, \qquad\ d^{(3)}_k=\frac 12 k^2+\frac 32 k+1, \qquad\ d^{(4)}_k =\frac 16 k^3 +k^2 +\frac {11}6 k +1.
$$
Then
\eqn
\tilde{\pmb{\zeta}}^{(m)\prime}_{P_n}(0)=\sum_{l=0}^m c_l \pmb{\zeta}^{(l)\prime}_{P_n} (0).
\feqn
Finally, going beyond the physical applications, we think that the polynomial zeta function is interesting by itself so that its
properties deserve a deeper investigation \cite{sprea1}, \cite{sprea3}.

\section*{{\bf Acknowledgments}}
I am grateful to F. Belgiorno both for suggestions and encouragement and to F. Dalla Piazza for useful discussions.
I am indebted with R. Marigo for the figure. Finally, I thank S. Pigola for useful comments.


\end{document}